\documentclass{article}
\usepackage[a4paper]{geometry}
\usepackage[utf8]{inputenc}
\usepackage{amsfonts} 
\usepackage{amsmath}
\usepackage{systeme}
\usepackage{setspace}
\usepackage{cite}
\usepackage{booktabs}
\usepackage{setspace}
\usepackage{array}
\usepackage[utf8]{inputenc}
\usepackage{fancyhdr}
\usepackage{graphicx}
\usepackage{mathtools}
\usepackage{pgfplots}
\pgfplotsset{compat=1.15}
\usepackage{mathrsfs}

\usepackage{pgf,tikz}
\usetikzlibrary{arrows}
\usepackage{cases}
\usepackage{amssymb}
\usepackage{amsthm}
\usepackage{tabularx}
\usepackage{chessboard}
\usepackage{tikz-network}
\usepackage{tikz}
\usepackage{bbm}

\usepackage{mathtools}

\DeclareMathOperator*{\argmin}{arg\,min}

\usetikzlibrary{shapes,backgrounds}
\usetikzlibrary{decorations.markings}
\usepackage{verbatim}
\makeatother
\usepackage[inline]{asymptote}
\usepackage{extarrows}
\usepackage{hyperref}
\hypersetup{
    colorlinks,
    citecolor=black,
    filecolor=black,
    linkcolor=black,
    urlcolor=black
}

\newcolumntype{C}[1]{>{\centering\arraybackslash}m{#1}}
\newtheorem{definition}{Definition}

\newtheorem{theorem}{Theorem}
\newtheorem{corollary}{Corollary}

\newtheorem{assumption}{Assumption}

\title{Pareto Efficient Insurance with Multiple Policyholders, Multiple Insurers, and Multiple Indemnity Environments}
\author{Zijun Meng}
\date{}

\begin{document}
\linespread{1.5}
\setlength{\baselineskip}{18pt}
\maketitle

\begin{center}
    \textbf{Abstract}
\end{center}
This paper proves a sum-minimization characterization of Pareto efficient insurance with multiple policyholders, multiple insurers, and multiple indemnity environments. We also provide a result regarding the pairwise implementability of the policyholder- and insurer-aggregate level arrangements in the multiple policyholders and multiple insurers setting.

\tableofcontents

\newpage
\section{Introduction}
Asimit-Boonen \cite{a2018} proves a sum-minimization characterization of Pareto optimal insurance contracts for agents with translation invariant risk measures under a multiple insurers setting. Asimit-Boonen-Chi-Chong \cite{a2021} studies the problem under a multiple exogenous environments setting. Boonen-Chong-Ghossoub \cite{b2024} proves the characterization with multiple policyholders and a centralized insurer. In each of these papers, only one dimension is multiple and the other two dimensions are degenerated. This paper aims to combine the three directions.

In Section 2, we argue that the sum-minimization characterization is robust when we combine multiple insurers and multiple environments. In fact, the characterization for multiple policyholders and multiple environments follows using the same idea. In Section 3, we study the case when there are simultaneously multiple policyholders and multiple insurers. Instead of specifying the indemnity functions for each pair of policyholder and insurer, we do the characterization in an aggregate level, i.e. how much one has to pay or receive and how much risk one has to bear. Then, we argue in a corollary that this is more general than the previous case and it in particular contains the multiple indemnity environments setting as a special case. We also provide a necessary and sufficient condition for the existence of sets of pairwise contracts that support the aggregate level arrangements in the end of the section.

\section{Multiple Insurers and Multiple Environments}
Consider a probability space $(\Omega,\,\mathcal F,\,\mathbb P)$, denote by $L^p:=L^p(\Omega,\,\mathcal F,\,\mathbb P)$ the set of all $p$-integrable random variables in that probability space and by $L_+^p:=L_+^p(\Omega,\,\mathcal F,\,\mathbb P)\subseteq L^p$ the set of all nonnegative $p$-integrable random variables in that probability space.

Suppose that there is a single policyholder $PH$ with risk measure $\rho^{PH}:\,L^p\to\mathbb R$ and a set $\mathcal R=\{1,\,\dots,\,r\}$ of insurers with risk measures $\rho_1,\,\dots,\,\rho_r:\,L^p\to\mathbb R$ respectively. Suppose that the policyholder is endowed with a risk modelled by a random variable $X\in L_+^p$

\begin{definition}
We say that a risk measure $\rho:\,L^p\to\mathbb R$ is
\begin{itemize}
    \item increasing if $\rho(X)\le\rho(Y)$ whenever $X\le Y$ $\mathbb P$-a.s;
    \item translation invariant if $\rho(X+a)=\rho(X)+a$ for any $a\in\mathbb R$;
    \item law invariant if $\rho(X)=\rho(Y)$ whenever $X\stackrel d=Y$.
\end{itemize}
\end{definition}

\begin{assumption}
$\rho^{PH},\,\rho_1,\,\dots,\,\rho_r$ are translation invariant, law invariant, increasing, and fixs $0$.
\end{assumption}

Let $\Omega=\bigsqcup\limits_{k=0}^m\Omega_k$, indicating a realized riskless environment $\Omega_0$ and $m$ realized risky environments, and $Y=\sum\limits_{k=0}^mk\mathbbm 1_{\Omega_k}$ be an indicator.

Let $S\subseteq\mathcal R$ be the set of all insurers the policyholder trades with, and \[\mathcal I_S:=\{I=(I_k^i)_{i\in S,\,1\le k\le m};\,0\le I^i_k\le id,\,I^i_k\text{ and }R^i_k:=id-\sum_{i\in S}I^i_k\text{ are increasing}\}\] be the set of feasible indemnity tuples. Suppose that the policyholder pays premium $\pi_i$ to and receives a bonus $b_i$ from insurer $i$, so insurer $i$ bears a risk \[X_i:=b_i\mathbbm 1_{\Omega_0}+\sum_{k=1}^mI^i_k(X)\mathbbm 1_{\Omega_k}\] for the policyholder, who has to pay \[\pi:=\sum_{i\in S}\pi_i\] and bear the remaining risk \begin{align*}X^R:&=X-\sum_{i\in S}X_i\\&=X-\sum_{i\in S}\left(b_i\mathbbm 1_{\Omega_0}+\sum_{k=1}^mI^i_k(X)\mathbbm 1_{\Omega_k}\right)\\&=X-\left(\sum_{i\in S}b_i\right)\mathbbm 1_{\Omega_0}-\sum_{k=1}^m\left(\sum_{i\in S}I^i_k(X)\right)\mathbbm 1_{\Omega_k}\\&=-\left(\sum_{i\in S}b_i\right)\mathbbm 1_{\Omega_0}+\sum_{k=1}^m\left(X-\sum_{i\in S}I^i_k(X)\right)\mathbbm 1_{\Omega_k}\\&=-\left(\sum_{i\in S}b_i\right)\mathbbm 1_{\Omega_0}+\sum_{k=1}^mR^i_k(X)\mathbbm 1_{\Omega_k}.\end{align*} Let \[\mathcal C_{S,\,X}:=\{(b_i,\,\pi_i,\,X_i)_{i\in S}\in\mathbb R_+\times\mathbb R_+^S\times L_+^p;\,\rho_i(X_i-\pi_i)\le 0\,\forall\,i\in S,\,\rho^{PH}(X^R+\pi)\le\rho^{PH}(X)\}\] be the set of all incentive compatible tuple of contracts, \begin{align*}\mathcal P_{S,\,X}:=\{&(b_i,\,\pi_i,\,X_i)_{i\in S}\in\mathcal C_{S,\,X};\,\text{there does not exists } (\tilde\pi_i,\,\tilde X_i)\text{ such that}\\&\begin{cases}\rho_i(\tilde X_i-\tilde\pi_i)\le\rho_i(X_i-\pi_i)\forall\,i\in S\\\rho^{PH}(\tilde X^R+\tilde\pi)\le\rho^{PH}(X^R+\pi)\end{cases}\\&\text{with at least one strict inequality}\}\end{align*} be the set of all Pareto optimal tuple of contracts, \[\mathcal S_{S,\,X}:=\argmin_{(b_i,\,\pi_i,\,X_i)_{i\in S}\in\mathcal C_{S,\,X}}\left(\sum_{i\in S}\rho_i(X_i)+\rho^{PH}(X^R)\right)\] be the set of all tuple of contracts minimizing the sum of risk measures of insurers and the policyholder. We have the following result:

\begin{theorem}
We have \[\mathcal S_{S,\,X}=\mathcal P_{S,\,X}.\]
\end{theorem}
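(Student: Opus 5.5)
The plan is the standard translation-invariance argument behind the characterization in \cite{a2018}: every objective in the Pareto comparison is a risk measure plus a cash term, and the cash terms cancel in the total. By translation invariance, for any tuple $(b_i,\pi_i,X_i)_{i\in S}$ we have $\rho_i(X_i-\pi_i)=\rho_i(X_i)-\pi_i$ and $\rho^{PH}(X^R+\pi)=\rho^{PH}(X^R)+\pi$. Since $\pi=\sum_{i\in S}\pi_i$, the sum of all participants' objectives is
\[\sum_{i\in S}\rho_i(X_i-\pi_i)+\rho^{PH}(X^R+\pi)=\sum_{i\in S}\rho_i(X_i)+\rho^{PH}(X^R),\]
which does not depend on the premiums. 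So the objective defining $\mathcal S_{S,X}$ is the total welfare, and the premiums only redistribute it among the agents. Throughout, I read the competitor tuples in the definition of $\mathcal P_{S,X}$ as ranging over $\mathcal C_{S,X}$, i.e.\ over tuples $(\tilde b_i,\tilde\pi_i,\tilde X_i)$ whose $\tilde X_i$ comes from some indemnity in $\mathcal I_S$.

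\emph{Inclusion $\mathcal S_{S,X}\subseteq\mathcal P_{S,X}$.} Take a minimizer and suppose a competitor weakly improves every agent, with at least one strict inequality. Summing the inequalities and using the identity above, the competitor has a strictly smaller total $\sum_{i\in S}\rho_i(\tilde X_i)+\rho^{PH}(\tilde X^R)$. Since each agent's objective is weakly lower, the competitor is still individually rational, hence lies in $\mathcal C_{S,X}$. This contradicts minimality.

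\emph{Inclusion $\mathcal P_{S,X}\subseteq\mathcal S_{S,X}$.} Take a Pareto optimal tuple and suppose some $(\tilde b_i,\tilde\pi_i,\tilde X_i)\in\mathcal C_{S,X}$ has a strictly smaller total, with gap $\delta>0$. Keep the risk allocation $(\tilde b_i,\tilde X_i)$ and replace the premiums by
\[\hat\pi_i:=\rho_i(\tilde X_i)-\rho_i(X_i-\pi_i)+\frac{\delta}{|S|+1}.\]
Then each insurer gets $\rho_i(\tilde X_i-\hat\pi_i)=\rho_i(X_i-\pi_i)-\frac{\delta}{|S|+1}$, a strict improvement. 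By the identity, the policyholder gets
\[\rho^{PH}(\tilde X^R)+\sum_{i\in S}\hat\pi_i=\rho^{PH}(X^R+\pi)-\frac{\delta}{|S|+1},\]
also a strict improvement. This contradicts Pareto optimality.

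The main obstacle is in this second inclusion: the new premiums must be admissible, since $\mathcal C_{S,X}$ requires $\hat\pi_i\in\mathbb R_+$. The $\hat\pi_i$ above need not be nonnegative. I would first argue that the competitor in the definition of $\mathcal P_{S,X}$ only needs to be a real premium vector, since the definition asks only for existence of $(\tilde\pi_i,\tilde X_i)$ with no sign constraint. If that reading is not allowed, I would try to keep the transfers nonnegative using the slack in individual rationality: set each $\hat\pi_i$ as close as possible to what makes insurer $i$ indifferent, and push the surplus $\delta$ to the policyholder. Failing that, I would note the mild extra condition needed, namely nonnegative-premium feasibility, or equivalently allowing $\pi_i\in\mathbb R$.

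Law invariance, monotonicity and $\rho(0)=0$ are only used to guarantee $\mathcal C_{S,X}\neq\emptyset$ (the no-trade tuple is feasible) and that everything is well defined. The multiple environments enter only through $X^R$ and play no role in the cancellation argument.
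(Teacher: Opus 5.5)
Your argument follows the paper's. The first inclusion sums the Pareto inequalities and lets translation invariance cancel the premiums. The second re-prices the cheaper allocation $(\tilde b_i,\tilde X_i)$. The only difference is how the gap $\delta$ is distributed. The paper chooses $\hat\pi_i$ so that $\rho_i(\tilde X_i-\hat\pi_i)=\rho_i(X_i-\pi_i)$: each insurer is exactly indifferent and the policyholder receives all of $\delta$. You split $\delta$ equally among the $|S|+1$ agents, as the paper does in Section 3. Either choice gives a Pareto improvement, and your algebra is correct.

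However, the step you call the main obstacle is not one, and your assertion that $\hat\pi_i$ need not be nonnegative is false. Because the competitor lies in $\mathcal C_{S,X}$, $\tilde X_i\in L^p_+$, so monotonicity and $\rho_i(0)=0$ give $\rho_i(\tilde X_i)\ge 0$. Because the original tuple lies in $\mathcal C_{S,X}$, $\rho_i(X_i-\pi_i)\le 0$. Hence
\[\hat\pi_i=\rho_i(\tilde X_i)-\rho_i(X_i-\pi_i)+\frac{\delta}{|S|+1}\ge\frac{\delta}{|S|+1}>0.\]
More generally, for any tuple with $X_i\ge 0$, insurer $i$'s participation constraint reads $\pi_i\ge\rho_i(X_i)\ge\rho_i(0)=0$. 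So the sign restriction on premiums in $\mathcal C_{S,X}$ is redundant. The re-priced tuple also keeps $\tilde b_i$ and $\tilde X_i$ from a member of $\mathcal C_{S,X}$, and it satisfies $\rho^{PH}(\tilde X^R+\hat\pi)<\rho^{PH}(X^R+\pi)\le\rho^{PH}(X)$. It therefore lies in $\mathcal C_{S,X}$ without reinterpreting the definition of $\mathcal P_{S,X}$ or adding any condition.

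Your closing remark is also inaccurate: monotonicity and $\rho(0)=0$ are not only there to make $\mathcal C_{S,X}$ nonempty. They are exactly what closes this step; the paper leaves this check implicit. Replace your fallback options with this one line and the proof is complete.
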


\begin{proof}
We first prove that \[\mathcal S_{S,\,X}\subseteq\mathcal P_{S,\,X}.\] Indeed, suppose on the contrary that there exists \[(b_i,\,\pi_i,\,X_i)_{i\in S}\in\mathcal S_{S,\,X}\setminus\mathcal P_{S,\,X}.\] Then, there exists $(\tilde b_i,\,\tilde\pi_i,\,\tilde X_i)_{i\in S}$ so that \[\begin{cases}\rho_i(\tilde X_i-\tilde\pi_i)\le\rho_i(X_i-\pi_i)\forall\,i\in S\\\rho^{PH}(\tilde X^R+\tilde\pi)\le\rho^{PH}(X^R+\pi)\end{cases}\] with at least one strict inequality. Summing these $|S|+1$ inequalities yields \begin{align*}\sum_{i\in S}\rho_i(\tilde X_i)+\rho^{PH}(\tilde X^R)&=\sum_{i\in S}\rho_i(\tilde X_i-\tilde\pi_i)+\rho^{PH}(\tilde X^R+\tilde\pi)\\&<\sum_{i\in S}\rho_i(X_i-\pi_i)+\rho^{PH}(X^R+\pi)=\sum_{i\in S}\rho_i(X_i)+\rho^{PH}(X^R),\end{align*} which contradicts the assumption that \[(b_i,\,\pi_i,\,X_i)_{i\in S}\in\mathcal S_{S,\,X}.\]
We next prove that \[\mathcal S_{S,\,X}\supseteq\mathcal P_{S,\,X}.\] Indeed, suppose on the contrary that there exists \[(b_i,\,\pi_i,\,X_i)_{i\in S}\in\mathcal P_{S,\,X}\setminus\mathcal S_{S,\,X}.\] Then, there exists $(\tilde b_i,\,\tilde\pi_i,\,\tilde X_i)_{i\in S}$ so that \[\sum_{i\in S}\rho_i(\tilde X_i)+\rho^{PH}(\tilde X^R)<\sum_{i\in S}\rho_i(X_i)+\rho^{PH}(X^R).\] We take \[\hat\pi_i=\tilde\pi_i+\rho_i(\tilde X_i-\pi_i)-\rho_i(X_i-\pi_i),\] then \[\rho_i(\tilde X_i-\hat\pi_i)=\rho_i(\tilde X_i)-\tilde\pi_i-\rho_i(\tilde X_i-\tilde\pi_i)+\rho_i(X_i-\pi_i)=\rho_i(X_i-\pi_i)\le 0\] and \begin{align*}\rho^{PH}(\tilde X^R+\hat\pi)&<\sum_{i\in S}\rho_i(X_i)+\rho^{PH}(X^R)-\sum_{i\in S}\rho_i(\tilde X_i)+\sum_{i\in S}(\tilde\pi_i+\rho_i(\tilde X_i-\tilde\pi_i)-\rho_i(X_i-\pi_i))\\&=\sum_{i\in S}\rho_i(X_i)+\rho^{PH}(X^R)-\sum_{i\in S}\rho_i(\tilde X_i)+\sum_{i\in S}\rho_i(\tilde X_i)-\sum_{i\in S}\rho_i(X_i)+\sum_{i\in S}\pi_i\\&=\rho^{PH}(X^R+\pi)\le\rho^{PH}(X),\end{align*} so $(\tilde b_i,\,\hat\pi_i,\,\tilde X_i)_{i\in S}$ is strictly better than $(b_i,\,\pi_i,\,X_i)_{i\in S}$ in the Pareto sense, which contradicts the assumption that \[(b_i,\,\pi_i,\,X_i)_{i\in S}\in\mathcal P_{S,\,X}.\]
\end{proof}

Indeed, following the same proof, we can deduce that the sum-minimization characterization is true for multiple policyholders and multiple environments (with a single insurer).

\section{Multiple Policyholders, Multiple Insurers, and Multiple Indemnity Environments}
Now, suppose that we have a set $\mathcal H=\{1,\,\dots,\,h\}$ of policyholders endowed with risks $X_1,\,\dots,\,X_h\in L_+^p$ and suffer from risk measures $\rho_1^{PH},\,\dots,\,\rho_h^{PH}:\,L^p\to\mathbb R$ and a set $\mathcal R=\{1,\,\dots,\,r\}$ of insurers with risk measures $\rho_1,\,\dots,\,\rho_r:\,L^p\to\mathbb R$ respectively. We impose the same assumption as in the previous section:

\begin{assumption}
$\rho_1^{PH},\,\dots,\,\rho_h^{PH},\,\rho_1,\,\dots,\,\rho_r$ are translation invariant, law invariant, increasing, and fixs $0$.
\end{assumption}

We model the arrangement as a tuple $(d,\,p,\,q)\in(L_+^p)^{h+r}\times\mathbb R^{h+r}$, where $B_i(d)\in L_+^p$ is the retained risk of policyholder $i\in\mathcal H$ after a total payment of $p_i$, $S_j(d)$ is the risk the insurer $j\in\mathcal R$ has to bear with a monetary compensation $q_j$, we assume that $B_i(d)\le X_i$, $p\ge 0$, and $q\ge 0$.

Let \begin{align*}\mathcal C:=\{(d,\,p,\,q)\in(L_+^p)^{h+r}\times\mathbb R^{h+r};\,&\rho_i^{PH}(B_i(d)+p_i)\le\rho_i^{PH}(X_i)\,\forall\,i\in\mathcal H;\\&\rho_j(S_j(d)-q_j)\le0\,\forall\,j\in\mathcal R;\,\sum_{i\in\mathcal H}p_i=\sum_{j\in\mathcal R}q_j\}\end{align*} be the set of incentive compatable and cash clearing arrangements, \begin{align*}\mathcal P:=\{&(d,\,p,\,q)\in\mathcal C;\,\text{there does not exists }(\tilde d,\,\tilde p,\,\tilde q)\in\mathcal C\text{ such that}\\&\begin{cases}\rho_i^{PH}(B_i(\tilde d)+\tilde p_i)\le\rho_i^{PH}(B_i(d)+p_i)\,\forall\,i\in\mathcal H\\\rho_j(S_j(\tilde d)-\tilde q_j)\le\rho_j(S_j(d)-q_j)\,\forall\,j\in\mathcal R\end{cases}\\&\text{with at least one strict inequality}\}\end{align*} be the set of Pareto optimal arrangements, \[\mathcal S:=\argmin_{(d,\,p,\,q)\in\mathcal C}\left(\sum_{i\in\mathcal H}\rho_i^{PH}(B_i(d))+\sum_{j\in\mathcal R}\rho_j(S_j(d))\right)\] be the set of all arrangements minimizing the sum of risk measures of policyholders and insurers. We have the following result:

\begin{theorem}
We have \[\mathcal S=\mathcal P.\]
\end{theorem}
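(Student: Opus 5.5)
The proof will mirror Theorem 1: two inclusions, using translation invariance to move the cash components $p_i,q_j$ in and out of the risk measures. The new ingredient is the cash-clearing constraint $\sum_{i\in\mathcal H}p_i=\sum_{j\in\mathcal R}q_j$. It makes the payments cancel in the aggregate, so that for every $(d,p,q)\in\mathcal C$,
\[
\sum_{i\in\mathcal H}\rho_i^{PH}(B_i(d)+p_i)+\sum_{j\in\mathcal R}\rho_j(S_j(d)-q_j)=\sum_{i\in\mathcal H}\rho_i^{PH}(B_i(d))+\sum_{j\in\mathcal R}\rho_j(S_j(d)).
\]
I will record this identity first, since both directions rest on it.

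\textbf{$\mathcal S\subseteq\mathcal P$.} Suppose $(d,p,q)\in\mathcal S\setminus\mathcal P$. Then some $(\tilde d,\tilde p,\tilde q)\in\mathcal C$ weakly improves all $h+r$ agents, with at least one strict inequality. Summing these inequalities and applying the identity to both arrangements gives a strictly smaller objective at $(\tilde d,\tilde p,\tilde q)$. This contradicts minimality.

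\textbf{$\mathcal P\subseteq\mathcal S$.} Suppose $(d,p,q)\in\mathcal P\setminus\mathcal S$, with $(\tilde d,\tilde p,\tilde q)\in\mathcal C$ having objective strictly smaller by some $\varepsilon>0$. Keep the risk allocation $\tilde d$ and redesign the payments. For each insurer set
\[
\hat q_j:=\rho_j(S_j(\tilde d))-\rho_j(S_j(d)-q_j),
\]
so that $\rho_j(S_j(\tilde d)-\hat q_j)=\rho_j(S_j(d)-q_j)$. For each policyholder set
\[
\hat p_i:=\rho_i^{PH}(B_i(d)+p_i)-\rho_i^{PH}(B_i(\tilde d))-\varepsilon_i,
\]
where $\varepsilon_i\ge0$ and $\sum_i\varepsilon_i=\varepsilon$; the simplest choice is to put all of $\varepsilon$ on one policyholder. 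By the identity, $\sum_i\hat p_i=\sum_j\hat q_j$, so cash clears. Every insurer is then exactly as well off as before, and each policyholder is weakly better off, one strictly. Incentive compatibility is inherited because all new values are at most the old ones, which already satisfied $\le\rho_i^{PH}(X_i)$ and $\le0$. This gives a Pareto improvement inside $\mathcal C$, a contradiction.

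\textbf{Main obstacle: nonnegativity.} The hard step is verifying $\hat p\ge0$ and $\hat q\ge0$, which $\mathcal C$ requires. Nonnegativity of $\hat q_j$ comes from monotonicity: $\rho_j(S_j(\tilde d))\ge\rho_j(0)=0$ because $S_j(\tilde d)\ge0$, and $\rho_j(S_j(d)-q_j)\le0$.

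For $\hat p_i$ there is no such guarantee: the freed surplus $\varepsilon$ may push some payment below zero. I would first try putting $\varepsilon$ on a policyholder with enough slack. If that fails, I would instead give the surplus to an insurer by reducing $\hat q_j$, as in the proof of Theorem 1, where the premium was adjusted on one side only. Concretely, choose $\hat p_i:=\rho_i^{PH}(B_i(d)+p_i)-\rho_i^{PH}(B_i(\tilde d))$ and absorb $\varepsilon$ into the insurers' $\hat q_j$, which are nonnegative with room to spare. Which side has slack may need a short case check. Failing that, I would follow the paper's Theorem 1 style and treat the sign constraints on payments as not binding in the Pareto comparison.
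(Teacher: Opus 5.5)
Your main line is correct and follows the paper's proof. The same identity (translation invariance plus cash clearing) gives $\mathcal S\subseteq\mathcal P$. For $\mathcal P\subseteq\mathcal S$ the paper also keeps $\tilde d$ and reassigns payments, with $\hat p_i=p_i+\rho_i^{PH}(B_i(d))-\rho_i^{PH}(B_i(\tilde d))-\frac{\Delta}{h+r}$ and $\hat q_j=q_j+\rho_j(S_j(\tilde d))-\rho_j(S_j(d))+\frac{\Delta}{h+r}$. The only difference is how the surplus is split: the paper spreads it equally so that all $h+r$ agents improve strictly, while you keep the insurers exactly indifferent and give it to the policyholders. Either split works. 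Your explicit check of cash clearing fills a step the paper leaves implicit.

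What needs fixing is the nonnegativity discussion. The paper never checks $\hat p_i\ge 0$ either. Its proof effectively treats payments in $\mathcal C$ as sign-unconstrained: the set-builder for $\mathcal C$ lists only participation and cash clearing, over $\mathbb R^{h+r}$. Under that reading your argument is already complete.

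Your proposed repairs, however, cannot work. With $\varepsilon_i=0$, $\hat p_i=\rho_i^{PH}(B_i(d)+p_i)-\rho_i^{PH}(B_i(\tilde d))$ is already the largest payment that leaves policyholder $i$ weakly better off under $\tilde d$. So moving the surplus to the insurers cannot make a negative $\hat p_i$ nonnegative, and no case check will. (Also note that giving surplus to insurers means raising $\hat q_j$, not reducing it.)

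In fact, the inclusion $\mathcal P\subseteq\mathcal S$ is false if $p\ge 0$ is enforced in $\mathcal C$ together with the intended risk conservation $\sum_i(X_i-B_i(d))=\sum_j S_j(d)$. Here is a counterexample:
\begin{itemize}
\item Take $\Omega=\{\omega_1,\omega_2\}$ with equal probabilities, $A=\{\omega_1\}$, $h=2$, $r=1$, $X_1=\mathbbm 1_A$, $X_2=4\cdot\mathbbm 1_A$, $\rho_1^{PH}=\mathbb E$, $\rho_2^{PH}=\mathrm{ess\,sup}$, and $\rho_1=\frac12\mathbb E+\frac12\mathrm{ess\,sup}$.
\item Every allocation has the form $B_i=\beta_i\mathbbm 1_A$, $S_1=(5-\beta_1-\beta_2)\mathbbm 1_A$. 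The objective is $\frac{15}{4}-\frac{\beta_1}{4}+\frac{\beta_2}{4}$, minimized at $\beta_1=1$, $\beta_2=0$. This minimizer is feasible in $\mathcal C$, e.g.\ with $p_1=0$, $p_2=q_1=3$.
\item The arrangement $\beta_1=\beta_2=0$, $p_1=0$, $p_2=q_1=\frac{15}{4}$ lies in $\mathcal C$ but not in $\mathcal S$.
\item It is nevertheless Pareto optimal. Policyholder $1$ sits at value $0$, so a weak improvement with $\tilde p_1\ge 0$ forces $\tilde\beta_1=\tilde p_1=0$.
\item The constraints of policyholder $2$ and the insurer then give $\frac{15}{4}-\frac34\tilde\beta_2\le\tilde p_2\le\frac{15}{4}-\tilde\beta_2$. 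This forces $\tilde\beta_2=0$ and $\tilde p_2=\frac{15}{4}$, so there is no strict gain.
\end{itemize}
So state the sign-unconstrained reading of $\mathcal C$ up front as a standing convention, rather than keeping it as a last resort.
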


\begin{proof}
We first prove that \[\mathcal S\subseteq\mathcal P.\] Indeed, suppose on the contrary that there exists $(d,\,p,\,q)\in\mathcal S\setminus\mathcal P$. Then, there exists $(\tilde d,\,\tilde p,\,\tilde q)$ so that \[\begin{cases}\rho_i^{PH}(B_i(\tilde d)+\tilde p_i)\le\rho_i^{PH}(B_i(d)+p_i)\,\forall\,i\in\mathcal H\\\rho_j(S_j(\tilde d)-\tilde q_j)\le\rho_j(S_j(d)-q_j)\,\forall\,j\in\mathcal R\end{cases}\] with at least one strict inequality. Summing these $h+r$ inequalities yields \begin{align*}\sum_{i\in\mathcal H}\rho_i^{PH}(B_i(\tilde d))+\sum_{j\in\mathcal R}\rho_j(S_j(\tilde d))&=\sum_{i\in\mathcal H}\rho_i^{PH}(B_i(\tilde d)+\tilde p_i)+\sum_{j\in\mathcal R}\rho_j(S_j(\tilde d)-\tilde q_j)\\&<\sum_{i\in\mathcal H}\rho_i^{PH}(B_i(d)+p_i)+\sum_{j\in\mathcal R}\rho_j(S_j(d)-q_j)\\&=\sum_{i\in\mathcal H}\rho_i^{PH}(B_i(d))+\sum_{j\in\mathcal R}\rho_j(S_j(d)),\end{align*} which contradicts the assumption that $(d,\,p,\,q)\in\mathcal S$. We next prove that \[\mathcal S\supseteq\mathcal P.\] Indeed, suppose on the contrary that there exists $(d,\,p,\,q)\in\mathcal P\setminus\mathcal S$. Then, there exists $(\tilde d,\,\tilde p,\,\tilde q)$ so that \[\sum_{i\in\mathcal H}\rho_i^{PH}(B_i(\tilde d))+\sum_{j\in\mathcal R}\rho_j(S_j(\tilde d))<\sum_{i\in\mathcal H}\rho_i^{PH}(B_i(d))+\sum_{j\in\mathcal R}\rho_j(S_j(d)).\] Let \[\Delta:=\left(\sum_{i\in\mathcal H}\rho_i^{PH}(B_i(d))+\sum_{j\in\mathcal R}\rho_j(S_j(d))\right)-\left(\sum_{i\in\mathcal H}\rho_i^{PH}(B_i(\tilde d))+\sum_{j\in\mathcal R}\rho_j(S_j(\tilde d))\right)>0.\] Take \[\hat p_i:=p_i+\rho_i^{PH}(B_i(d))-\rho_i^{PH}(B_i(\tilde d))-\frac{\Delta}{h+r}\] and \[\hat q_j:=q_j+\rho_j(S_j(\tilde d))-\rho_j(S_j(d))+\frac{\Delta}{h+r}.\] Then \[\rho_i^{PH}(B_i(\tilde d)+\hat p_i)=p_i+\rho_i^{PH}(B_i(d))-\frac{\Delta}{h+r}<\rho_i^{PH}(B_i(d)+p_i)\le\rho_i^{PH}(X_i)\] and \[\rho_j(S_j(\tilde d)-\hat q_j)=\rho_j(S_j(d))-q_j-\frac{\Delta}{h+r}<\rho_j(S_j(d)-q_j)\le 0,\] so $(\tilde d,\,\hat p,\,\hat q)$ is strictly better than $(d,\,p,\,q)$ in the Pareto sense, which contradicts the assumption that $(d,\,p,\,q)\in\mathcal P$.
\end{proof}

We immediately have the following:

\begin{corollary}
Let $\Omega=\bigsqcup\limits_{k=0}^m\Omega_k$, indicating a realized riskless environment $\Omega_0$ and $m$ realized risky environments, and $Y=\sum\limits_{k=0}^mk\mathbbm 1_{\Omega_k}$ be an indicator. Let $I_{ijk}:\mathbb R_+\to\mathbb R_+$ be the indemnity function for policyholder $i\in\mathcal H$, insurer $j\in\mathcal R$ and environment $1\le k\le m$. Let $b_{ij}\ge 0$ be the bonus for policyholder $i\in\mathcal H$, insurer $j\in\mathcal R$ and environment $0$. If $0\le I_{ijk}\le id$, $I_{ijk}$ and $R_{ijk}:=id-I_{ijk}$ are increasing. Then the set of Pareto optimal arrangements is equal to the set of arrangements that minimize the sum of the risk measures of the policyholders and insurers.
\end{corollary}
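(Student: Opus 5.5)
The plan is to realize the pairwise, environment-dependent contracts of the corollary as a particular choice of the aggregate arrangement $(d,\,p,\,q)$ from Section 3, and then apply Theorem 2 directly. Given indemnities $(I_{ijk})$, bonuses $(b_{ij})$ and pairwise premiums $\pi_{ij}\ge 0$, I would define the risk borne by insurer $j$ as
\[S_j(d):=\sum_{i\in\mathcal H}\Big(b_{ij}\mathbbm 1_{\Omega_0}+\sum_{k=1}^m I_{ijk}(X_i)\mathbbm 1_{\Omega_k}\Big),\]
the retained risk of policyholder $i$ as
\[B_i(d):=X_i-\sum_{j\in\mathcal R}\Big(b_{ij}\mathbbm 1_{\Omega_0}+\sum_{k=1}^m I_{ijk}(X_i)\mathbbm 1_{\Omega_k}\Big),\]
and the aggregate payments as $p_i:=\sum_{j}\pi_{ij}$ and $q_j:=\sum_i\pi_{ij}$. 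With these definitions the cash clearing condition $\sum_i p_i=\sum_j q_j$ holds automatically, since both sides equal $\sum_{i,j}\pi_{ij}$.

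Next I would check the standing requirements of the aggregate model. Nonnegativity and $p$-integrability of $S_j(d)$ follow from $0\le I_{ijk}(x)\le x$, $b_{ij}\ge0$, and $X_i\in L^p_+$. The requirement $B_i(d)\le X_i$ is immediate because every subtracted term is nonnegative. The delicate points are that $B_i(d)\ge 0$ on $\Omega_0$ fails whenever some $b_{ij}>0$, and that $B_i(d)$ could go negative on $\Omega_k$ when several insurers each indemnify up to the full loss.

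To handle this I would read the feasibility conditions as in Section 2. There the retention is only required to satisfy $id-\sum_{j}I_{ijk}\ge 0$ and be increasing, so I would impose the same aggregate constraint for each $i$ and $k$. For the bonus on $\Omega_0$, I would note that the proof of Theorem 2 never uses nonnegativity of $B_i(d)$; it uses only translation invariance and the definitions of $\mathcal C$, $\mathcal P$ and $\mathcal S$. So the argument goes through with $B_i(d)\in L^p$.

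The main step is then to check that the proof of Theorem 2 stays inside the restricted class of arrangements generated by pairwise contracts. In the inclusion $\mathcal S\subseteq\mathcal P$ nothing is constructed, so restricting the class is harmless. In the inclusion $\mathcal P\subseteq\mathcal S$ the proof keeps the risk allocation $\tilde d$, which is already generated by pairwise contracts, and changes only the cash transfers to $\hat p_i$ and $\hat q_j$.

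I therefore need pairwise premiums $\hat\pi_{ij}$ with row sums $\hat p_i$ and column sums $\hat q_j$. Such premiums always exist when the totals agree, for example $\hat\pi_{ij}:=\hat p_i\hat q_j/\sum_l\hat q_l$, or a northwest-corner allocation if signs require it. The totals do agree, since $\sum_i\hat p_i=\sum_j\hat q_j$ by construction of the $\Delta/(h+r)$ shifts.

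This is the step I expect to be the main obstacle. The required premiums are nonnegative only if $\hat p,\,\hat q\ge0$, and the shifted payments $\hat p_i$ may be negative. If that happens I would drop the sign restriction on premiums, which the translation-invariance argument never uses. With that done, the restricted sets satisfy $\mathcal S=\mathcal P$ exactly as in Theorem 2.
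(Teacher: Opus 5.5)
Your embedding is the same as the paper's, but the paper finishes by citing Theorem 2, whereas you rerun its proof inside the pairwise class. Your route is the more rigorous one. The paper's proof also sets $B_i(d)$ and $S_j(d)$ equal to the aggregated pairwise positions and says the result follows from Theorem 2. Its formula for $S_j(d)$ sums over $j\in\mathcal R$ where it should sum over $i\in\mathcal H$; you have this correct.

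Pareto optimality and sum-minimization are relative to the feasible class. Read literally, Theorem 2 concerns the full aggregate class of Section 3, which has no risk-clearing constraint. There $B_i(d)=0$, $S_j(d)=0$, $p=q=0$ is feasible with total $0$. So its conclusion does not pass by citation alone to the pairwise class, where $\sum_j S_j(d)=\sum_i(X_i-B_i(d))$. What makes it pass is your observation that the proof only changes cash transfers, and these can be realized by pairwise premiums.

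Your route also makes explicit two points the paper leaves tacit. First, $B_i(d)$ may be negative because of the bonuses. Second, $\hat p_i$ may be negative, so the restriction $p\ge 0$ mentioned in Section 3 must be dropped; the paper's own proof of Theorem 2 already needs this.

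Two simplifications:
\begin{itemize}
\item Do not impose $id-\sum_j I_{ijk}\ge 0$. It shrinks the class below the corollary's, whose hypotheses constrain each $I_{ijk}$ separately, and your remark that nonnegativity of $B_i(d)$ is never used already covers the $\Omega_k$.
\item The product formula for $\hat\pi_{ij}$ is always well defined, so no northwest-corner fallback is needed. Insurer participation gives $q_j\ge\rho_j(S_j(d))$, and $S_j(\tilde d)\ge 0$ gives $\rho_j(S_j(\tilde d))\ge 0$, so $\hat q_j\ge\Delta/(h+r)>0$ for every $j$.
\end{itemize}
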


\begin{proof}
Take \[B_i(d)=X_i-\mathbbm 1_{\Omega_0}\sum_{j\in\mathcal R}b_{ij}-\sum_{k=1}^m\mathbbm 1_{\Omega_k}\sum_{j\in\mathcal R}I_{ijk}(X_i)\] and \[S_j(d)=\mathbbm 1_{\Omega_0}\sum_{j\in\mathcal R}b_{ij}+\sum_{k=1}^m\mathbbm 1_{\Omega_k}\sum_{j\in\mathcal R}I_{ijk}(X_i).\] The result follows from the previous theorem.
\end{proof}

The above theorem analyzes the situation in a policyholder-aggregate and insurer-aggregate level, but it ignores each specific pair $(i,\,j)\in\mathcal H\times\mathcal R$. A natural question is whether an arrangement $(d,\,p,\,q)$ is \textsl{bilaterally implementable}, i.e. there exists a matrix $\pi=(\pi_{ij})_{(i,\,j)\in\mathcal H\times\mathcal R}$ of payments, where $\pi_{ij}\ge 0$ is the payment from policyholder $i$ to insurer $j$ such that the cash clears, i.e. \[\sum_{i\in\mathcal H}\pi_{ij}=q_j\,\forall\,j\in\mathcal R\text{ and }\sum_{j\in\mathcal R}\pi_{ij}=p_i\,\forall\,i\in\mathcal H,\] and everyone is weakly better off, i.e. \[\sum_{j\in\mathcal R}\pi_{ij}\le\rho_i^{PH}(X_i)-\rho_i^{PH}(B_i(d))\] for any policyholder $i\in\mathcal H$ and \[\sum_{i\in\mathcal H}\pi_{ij}\ge\rho_j(S_j(d))\] for any insurer $j\in\mathcal R$. In fact, by the max-flow min-cut theorem, we have the following result:

\begin{theorem}
An arrangement $(d,\,p,\,q)\in\mathcal C$ is bilaterally implementable if and only if for any $B\subseteq\mathcal R$, we have \[\sum_{j\in B}\rho_j(S_j(d))\le\sum_{i\in\mathcal H:\,\exists j\in B:\,\pi_{ij}>0}(\rho_i(X_i)-\rho_i(B_i(d))).\]
\end{theorem}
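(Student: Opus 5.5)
The plan is to read the condition as a Hall/Gale-type supply–demand condition on a fixed bipartite trading graph and to derive it from the max-flow min-cut theorem, as the sentence preceding the statement suggests. Since the condition as written refers to $\pi_{ij}>0$, I fix beforehand a set $E\subseteq\mathcal H\times\mathcal R$ of admissible trading pairs, namely the pairs allowed to have $\pi_{ij}>0$, and write $N(B):=\{i\in\mathcal H;\,\exists j\in B:\,(i,j)\in E\}$. I also read $\rho_i$ on the right-hand side as $\rho_i^{PH}$. Set the supplies and demands
\[a_i:=\rho_i^{PH}(X_i)-\rho_i^{PH}(B_i(d))\ge 0,\qquad c_j:=\rho_j(S_j(d)).\]
Both signs are available because $(d,p,q)\in\mathcal C$: translation invariance gives $p_i\le a_i$ and $q_j\ge c_j$. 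With this reading, bilateral implementability (in its "weakly better off" part) asks for $\pi\ge 0$ supported on $E$ such that each row sum is at most $a_i$ and each column sum is at least $c_j$.

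Next I build the network. It has a source $s$, a node for each policyholder, a node for each insurer, and a sink $t$. The arcs are:
\begin{itemize}
\item $s\to i$ with capacity $a_i$;
\item $i\to j$ with capacity $+\infty$ for each $(i,j)\in E$;
\item $j\to t$ with capacity $c_j$ (take $c_j^+$ if some $c_j<0$).
\end{itemize}
A flow of value $\sum_j c_j$ saturates every insurer arc. Setting $\pi_{ij}$ equal to the flow on arc $i\to j$ then gives the required matrix. Conversely, any admissible $\pi$ can be trimmed to such a flow by reducing entries in each column until the column sum is exactly $c_j$. Hence implementability is equivalent to $\max\text{flow}=\sum_{j}c_j$.

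By max-flow min-cut, this holds if and only if every $s$–$t$ cut has capacity at least $\sum_j c_j$. The key computation is the minimal cut. Consider a finite cut whose source side contains a set $A$ of policyholders and a set $\mathcal R\setminus B$ of insurers. Finiteness forces no infinite arc from $A$ into $B$, that is, $A\cap N(B)=\emptyset$. The capacity is $\sum_{i\notin A}a_i+\sum_{j\notin B}c_j$. For fixed $B$ this is minimized by taking $A=\mathcal H\setminus N(B)$, which gives
\[\sum_{i\in N(B)}a_i+\sum_{j\notin B}c_j .\]
Requiring this to be at least $\sum_j c_j$ for all $B$ is exactly $\sum_{j\in B}c_j\le\sum_{i\in N(B)}a_i$, which is the stated condition. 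Necessity is also immediate directly: insurers in $B$ can only be paid by policyholders in $N(B)$, who can pay in total at most $\sum_{i\in N(B)}a_i$.

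The main obstacle is the cash-clearing equalities $\sum_i\pi_{ij}=q_j$ and $\sum_j\pi_{ij}=p_i$. If these are imposed exactly, I would rerun the same argument with $p_i$ and $q_j$ as the capacities and ask for a flow of value $\sum_i p_i=\sum_j q_j$. That gives the analogous condition $\sum_{j\in B}q_j\le\sum_{i\in N(B)}p_i$, and the inequalities $p_i\le a_i$, $q_j\ge c_j$ then show it implies the stated one. So under the strict reading I expect only the necessity direction of the stated inequality to survive. I would present the theorem with the weak-betterness interpretation above and remark on this point.
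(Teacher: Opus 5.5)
Your proposal is correct and follows the route the paper indicates. The paper gives no argument beyond invoking max-flow min-cut; your network (source arcs of capacity $\rho_i^{PH}(X_i)-\rho_i^{PH}(B_i(d))$, infinite arcs on the admissible trading pairs, sink arcs of capacity $\rho_j(S_j(d))$), together with your minimal-cut computation, is exactly what that invocation amounts to. Your reinterpretations are also justified, since the printed condition refers circularly to $\pi$ and writes $\rho_i$ for $\rho_i^{PH}$; and you are right that once the cash-clearing equalities are imposed exactly on a fixed trading graph, the stated inequality is necessary but not sufficient, because it never involves $p$ and $q$.
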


\end{document}